\newtheorem{lemma}{Lemma}
\newtheorem{definition}{Definition}
\newtheorem{procedure}{Procedure}
\DeclareMathOperator{\diag}{diag}
\title{Quantum routing in planar graph using perfect state transfer}
\author{Supriyo Dutta \\ Depertment of Mathematics \\ National Institute of Technology Agartala \\ Barjala, Jirania, Tripura(W), India - 799046. \\ Email: \texttt{dosupriyo@gmail.com}}
\date{} 
\begin{document}
	\maketitle 
	\begin{abstract}
		In this article, we consider a spin-spin interaction network governed by $XX + YY$ Hamiltonian. The vertices and edges of the network represent the spin objects and their interactions, respectively. We take a privilege to switch on or off any interaction, that assists us to perform multiple perfect state transfers in a graph simultaneously. We also build up a salable network allowing quantum communication between two arbitrary vertices. Later we utilize the combinatorial characteristics of hypercube graphs to propose a static routing schema to communicate simultaneously between a set of senders and a set of receivers in a planar network. Our construction is new and significantly powerful. We elaborate multiple examples of planar graphs supporting quantum routing where classical routing is not possible.
		
		\textbf{Keywords} Perfect state transfer, quantum routing, scalable quantum communication network, hypercube.
	\end{abstract}

	\section{Introduction}
		
		In the theory of quantum information processing \cite{benenti2004principles, pavicic2006quantum, nikolopoulos2014quantum}, we perform different communication tasks using quantum networks, for example, quantum state transfer, routing, switching and splitting of quantum signals, etc \cite{van2014quantum, kimble2008quantum, caleffi2018quantum, pirandola2016physics, cacciapuoti2019quantum}. We encode the information to be communicated onto quantum states. The elementary unit of quantum information is the qubit, which is a well-defined physical system to store the logical $0$ and $1$ as well as their arbitrary superpositions. A quantum network is a combination of a number of physical objects generating the qubits. The vertices of the network correspond to the qubits. We built up a larger network by linking a number of nodes together. Similarly, a quantum computing device consists of qubits, which are essentially static entities. We link distinct quantum processors effectively to transfer quantum information with high fidelity between different processors of quantum computing hardware. It supports the DiVincenzo criteria for scalable quantum architecture \cite{divincenzo2000physical}. 
		
		Instead of moving the qubits physically between different parts of a quantum architecture, we can realize buses for quantum states to travel between different locations. The quantum state will be transferred from qubit to qubit due to the interactions between spin objects, which neglects the necessity of moving entities. An example of such a system is a spin chain that consists of many permanently coupled qubits. In this article, we will use the terms qubits and spins interchangeably. Therefore, we may use a spin chain of static qubits to transport arbitrary quantum states from one place to another \cite{bose2003quantum, christandl2004perfect}. This strategy enables us to fabricate ``all solid-state” chips containing only a single species of qubits for information processing and transport. The interaction between spins is modelled by different types of spin-spin interaction Hamiltonians. The dynamics of $XX + YY$ Hamiltonian are utilized to connect remote registers of a scalable quantum architecture, in this work. This spin chain architecture provides the idea of data buses and entanglers between two extreme ends of the spin chain \cite{bose2014spin}.
		
		Another motivation behind the present work comes from superconducting quantum devices. The superconducting qubits can be treated as the vertices in a network. The edges correspond to microwave buses joining the qubits. The effective Hamiltonian \cite{majer2007coupling} representing two qubits of frequency $\omega_{1,2}$ interacting with a transmission line resonator with resonance frequency $\omega_r$, modeling the cavity,  is given by
		\begin{equation}\label{equn_1}
			H_{eff} = \frac{\hbar \omega_1}{2} \sigma_1^z + \frac{\hbar \omega_2}{2} \sigma_2^z + \hbar(\omega_r + \xi_1 \sigma_1^z + \xi_2 \sigma_2^z)a^\dagger a + \hbar J (\sigma_1^- \sigma_2^+ + \sigma_2^- \sigma_1^+).
		\end{equation} 
		The above Hamiltonian is obtained for the qubits strongly detuned from the resonator and after an adiabatically elimination of the resonant Jaynes-Cummings
		interaction.  Here $\xi_{1,2}$ are the frequency shifts which can be calculated from the detunings and the coupling strength of the resonator to the qubits. This can be generalized for an arbitrary number of qubits coupled to the same mode of the cavity resonator. The last term in the above Hamiltonian
		$\hbar J (\sigma_1^- \sigma_2^+ + \sigma_2^- \sigma_1^+)$ is the flip-flop interaction through virtual interaction with the resonator. Note that, this  term can be converted into an $XX + YY$ Hamiltonian.
		
		In this work, we first propose a method to construct a scalable quantum network, where we can communicate between two arbitrary vertices using Perfect State Transfer (PST). Then, we build up a quantum routing schema. In classical communication, routing is a process for finding paths in a network to communicate between a set of senders and a set of receivers \cite{huitema1995routing}. There have been many attempts to utilize the idea of state transfer in quantum routing \cite{bose2009communication, sadlier2019near, chudzicki2010parallel, pemberton2011perfect, zhan2014perfect, paganelli2013routing}. These proposals allow state transfer between two vertices of a network with a probability of less than $1$. Our idea is significantly different, in this context. We apply PST in every step of our communication protocol. Moreover, all of our networks are planar. The motivation behind considering planar graphs is building a quantum architecture which can be designed on a planar chip. The senders and receivers are the nodes at the infinite face of the network. The senders transmit quantum information to the receivers using a number of PST via a sequence of subgraphs. In classical routing, we find out a number of edge-disjoint paths between the senders and the receivers. This is not essential in the case of quantum routing. One subgraph may support communication between two pairs of senders and receivers, which may be considered an advantage of quantum communication.
		
		This article is distributed as follows. The next section describes a number of preliminary ideas on quantum state transfer and graph theory. In section 3, we establish a mathematical foundation which makes multiple state transfer in a network possible. In section 4, we construct a number of scalable quantum communication networks where we can communicate between any two vertices. Section 5 is dedicated to quantum routing. We propose a static routing protocol on arbitrary planar graphs.

	\section{Preliminary ideas}
	
		A graph $G = (V(G), E(G))$ is a combination of a vertex set $V(G)$ and an edge set $E(G) \subset V(G) \times V(G)$ \cite{west2001introduction}. We represent a spin-spin interaction with a graph. The vertices correspond to the qubits or spin objects. If two spins are allowed to interact, we join the corresponding vertices by exactly one edge and vice versa. In this article, we consider the interaction Hamiltonian as
		\begin{equation}
		H_{XY} = \frac{1}{2} \sum_{(u, v) \in E(G)} J_{u, v} ( \sigma^u_x \sigma^v_x + \sigma^u_y \sigma^v_y ).
		\end{equation} 
	    Here, $J_{u, v}$ is the coupling strength between two spin objects located at the vertices $u$ and $v$. Also, $\sigma^u_p = I_2 \otimes I_2 \otimes \dots \sigma_p (u\text{-th position})\otimes \dots \otimes I_2$, where $\sigma_p = \sigma_x$ and $\sigma_y$, are the Pauli $x$ and $y$ operators, respectively. Here, we assume that all possible coupling strength are equal to $1$, which makes $G$ a simple graph. 
		
		The adjacency matrix of a graph $G$ with $n$ vertices is defined by $A(G) = (a_{i,j})_{n \times n}$, where $a_{i,j} = 1$ if $(i, j) \in E(G)$ and ; $a_{i,j} = 0$ otherwise. It is proved that action of $H_{XY}$ on $\mathbb{C}^{2^n}$ is equivalent to the action of $\exp(-\iota A(G) t)$ on $\mathbb{C}^n$ when $J_{u, v} = 1$ for all $(u, v) \in E(G)$ \cite{osborne2006statics}. Now, corresponding to every vertex $u \in V(G)$ we assign a basis vector of $\mathbb{C}^n$, which is denoted by $\ket{u}$. There is a Perfect State Transfer (PST) between the vertices $u$ and $v$ at time $t = \tau$ if $\braket{v | \exp(-\iota A(G) \tau) | u} = 1$. For PST $u$ and $v$ must belong to the same connected component. PST is a rare phenomenon \cite{godsil2010can, cheung2011perfect, coutinho2015no}. There is no PST between two vertices if $G$ contains no edge.
		
		As mentioned above, we will limit ourselves to planar graphs in this work. We can embed a planar graph on a plane in such a way that its edges intersect only at their endpoints. When a planar graph is depicted without edges crossing, the vertices and edges divide the plane into regions or faces. The infinite face surrounds all other regions. A number of well-known planar graphs allowing PST are depicted in the figure \ref{graphs_with_PST}. They are path graphs with $2$ and $3$ vertices as well as the hypercube graphs of $4$ and $8$ vertices \cite{kendon2011perfect}. Note that, a hypercube graph with more than 8 vertices is not a planar graph. 
		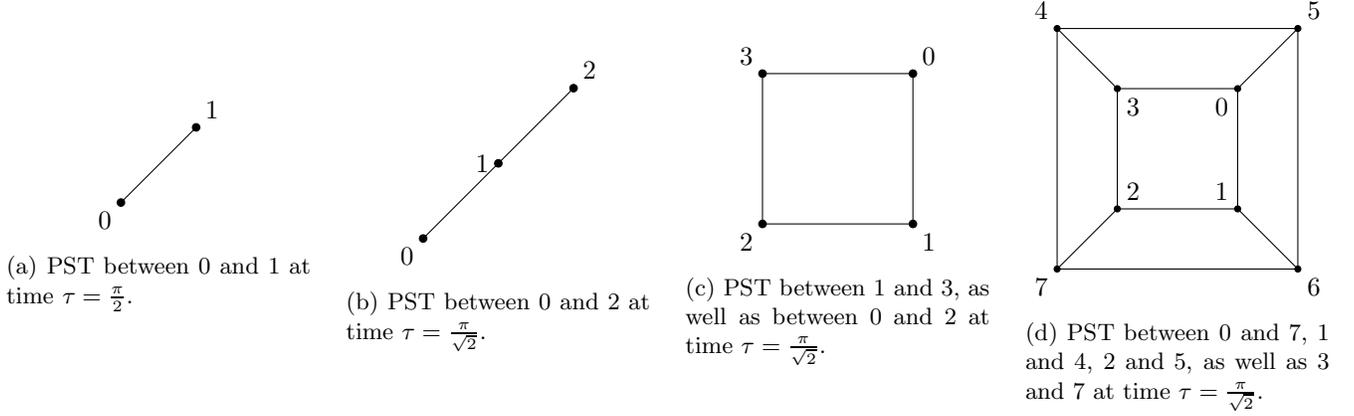
\begin{figure}
			\centering 
			\begin{subfigure}{.23 \textwidth}
				\centering
				\begin{tikzpicture}
					\draw [fill] (0, 0) circle [radius = .5mm];
					\node [below left] at (0, 0) {$0$};
					\draw [fill] (1, 1) circle [radius = .5mm];
					\node [above right] at (1,1) {$1$};
					\draw (0, 0) -- (1, 1);
				\end{tikzpicture}
				\caption{PST between $0$ and $1$ at time $\tau = \frac{\pi}{2}$.}
			\end{subfigure}
			\quad 
			\begin{subfigure}{.23 \textwidth}
				\centering
				\begin{tikzpicture}
					\draw [fill] (0, 0) circle [radius = .5mm];
					\node [below left] at (0, 0) {$0$};
					\draw [fill] (1, 1) circle [radius = .5mm];
					\node [left] at (1, 1) {$1$};
					\draw [fill] (2, 2) circle [radius = .5mm];
					\node [above right] at (2, 2) {$2$};
					\draw (0, 0) -- (2, 2);
				\end{tikzpicture}
				\caption{PST between $0$ and $2$ at time $\tau = \frac{\pi}{\sqrt{2}}$.}
			\end{subfigure}
			\quad 
			\begin{subfigure}{.23 \textwidth}
				\centering
				\begin{tikzpicture}
					\draw [fill] (1, 1) circle [radius = .5mm];
					\node [above right] at (1, 1) {$0$};
					\draw [fill] (1, -1) circle [radius = .5mm];
					\node [below right] at (1, -1) {$1$};
					\draw [fill] (-1, -1) circle [radius = .5mm];
					\node [below left] at (-1, -1) {$2$};
					\draw [fill] (-1, 1) circle [radius = .5mm];
					\node [above left] at (-1, 1) {$3$};
					\draw (1, 1) -- (1, -1) -- (-1, -1) -- (-1, 1) -- (1, 1);
				\end{tikzpicture}
				\caption{PST between $1$ and $3$, as well as between $0$ and $2$ at time $\tau = \frac{\pi}{\sqrt{2}}$.}
			\end{subfigure}
			\quad 
			\begin{subfigure}{.23 \textwidth}
				\centering
				\begin{tikzpicture}[scale = .8]
					\draw [fill] (1, 1) circle [radius = .5mm];
					\node [below left] at (1, 1) {$0$};
					\draw [fill] (1, -1) circle [radius = .5mm];
					\node [above left] at (1, -1) {$1$};
					\draw [fill] (-1, -1) circle [radius = .5mm];
					\node [above right] at (-1, -1) {$2$};
					\draw [fill] (-1, 1) circle [radius = .5mm];
					\node [below right] at (-1, 1) {$3$};
					\draw (1, 1) -- (1, -1) -- (-1, -1) -- (-1, 1) -- (1, 1);
					\draw [fill] (2, 2) circle [radius = .5mm];
					\node [above right] at (2, 2) {$5$};
					\draw [fill] (2, -2) circle [radius = .5mm];
					\node [below right] at (2, -2) {$6$};
					\draw [fill] (-2, -2) circle [radius = .5mm];
					\node [below left] at (-2, -2) {$7$};
					\draw [fill] (-2, 2) circle [radius = .5mm];
					\node [above left] at (-2, 2) {$4$};
					\draw (2, 2) -- (2, -2) -- (-2, -2) -- (-2, 2) -- (2, 2);
					\draw (1, 1) -- (2, 2);
					\draw (1, -1) -- (2, -2);
					\draw (-1, 1) -- (-2, 2);
					\draw (-1, -1) -- (-2, -2);
				\end{tikzpicture}
				\caption{PST between $0$ and $7$, $1$ and $4$, $2$ and $5$, as well as $3$ and $7$ at time $\tau = \frac{\pi}{\sqrt{2}}$.}
				\label{hypercube_with_8_vertices}
			\end{subfigure}
			\caption{Examples of well-known planar graphs with PST \cite{christandl2004perfect}.}
			\label{graphs_with_PST}
		\end{figure}
		
		Given two graphs $G_1 = (V(G_1), E(G_1))$ and $G_2 = (V(G_2), E(G_2))$ we write $G_1 = G_2$ if $V(G_1) = V(G_2)$, $E(G_1) = E(G_2)$ and they have same vertex labeling. A graph $G_1$ is said to be a subgraph of a graph $G_2$ if $V(G_1) \subset V(G_2)$ and $E(G_1) \subset E(G_2)$. The union of two graphs $G_1$ and $G_2$ is a new graph  $G = G_1 \cup G_2 = (V(G), E(G))$, such that, $V(G) = V(G_1) \cup V(G_2)$ and $E(G) = E(G_1) \cup E(G_2)$. Note that, $G_1$ and $G_2$ are subgraphs of $G_1 \cup G_2$. In a similar fashion we can define $G = G_1 \cup G_2 \cup \dots G_m$ for $m$ subgraphs $G_1, G_2, \dots G_m$. If $A(G_i)$ is the adjacency matrix of $G_i$, then the adjacency matrix of $G$ is given by a block diagonal matrix $A(G) = \diag\{A(G_1), A(G_2), \dots , A(G_m)\}$. It is easy to prove that 
		\begin{equation}
			\exp(-\iota A(G) \tau) = \diag\{\exp(-\iota A(G_1) \tau), \exp(-\iota A(G_2) \tau), \dots , \exp(-\iota A(G_m) \tau)\}.
		\end{equation}
		
		A path of length $p$ is a sequence of distinct vertices and edges $\{v_1, e_1, v_2, e_2, \dots e_p, v_{p + 1}\}$, such that $e_i = (v_i, v_{i + 1})$ for $i = 1, 2, \dots p$. Two paths are said to be edge-disjoint if there is no edge belonging to both the paths. Two subgraphs $G_1$ and $G_2$ are said to be disjoint if $V(G_1) \cap V(G_2) = \emptyset$ and $E(G_1) \cap E(G_2) = \emptyset$. A graph is said to be connected if there is a path between any two vertices. Connectivity between two vertices is an equivalence relation. The equivalence classes are called the connected components of a graph. The distance between two vertices $u$ and $v$ is denoted by $d(u, v)$ and defined by the length of the shortest path between them. The diameter of a graph is $d = \max_{u, v \in V(G)} d(u, v)$, that is the maximum distance between any two vertices in it. The eccentricity $\epsilon(v)$ of a vertex $v$ is the greatest distance between $v$ and any other vertex in the graph. Symbolically, $\epsilon (v) = \max _{u\in V(G)} d(v,u)$. The radius of a graph is $r$, which is the minimum eccentricity of the vertices in $G$. A central vertex in a graph of radius $r$ is a vertex with eccentricity $r$. The set of all central vertices in $G$ is denoted by $C(G)$, which is called the centre of $G$.

	\section{Hamiltonian engineering}
		
		Hamiltonian engineering was utilized in NMR for achieving NMR diffraction in solid \cite{waugh1968approach, mansfield1975diffraction}. We turn on couplings between a number of selective excitations \cite{warren1979selective} and decouple then dynamically. We can accomplish simultaneous tunability of the coupling strengths by exploiting the magnetic-field gradients. The goals of Hamiltonian engineering is summarized as follows \cite{ajoy2013quantum, singh2020perfect}:
		\begin{enumerate}
			\item 
			The cancellation of unwanted couplings, that is $J_{u, v} = 0$ if the edge $(u, v)$ is not involved in PST. 
			\item
			Performing a PST with the remaining couplings with $J_{u, v} = 1$.
		\end{enumerate}
		In this work, we compare the coupling between qubits with the edges between the nodes. Therefore, switching on or off a particular edge is equivalent to coupling and decoupling the interactions. Here, we mathematically justify how PST holds in a graph after switching on or off a number of edges. 
		
		\begin{lemma}\label{single_PST}
			Let $G$ be a graph with $n$ vertices allowing PST between $u$ and $v$ at time $\tau$. Then there is a state transfer between $u$ and $v$ at time $\tau$ in the graph $G' = G \cup {v_1}  \cup {v_2} \cup \dots  \cup {v_m}$, where $v_1, v_2, \dots v_m$ are isolated vertices of any number $m$.
		\end{lemma}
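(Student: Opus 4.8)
The plan is to exploit the block-diagonal structure that isolated vertices force onto the adjacency matrix, combined with the exponential identity already recorded in the excerpt. Each isolated vertex $v_i$ is itself a single-vertex graph carrying no edge, so its adjacency matrix is the $1 \times 1$ zero matrix $[0]$. Consequently $G' = G \cup v_1 \cup \dots \cup v_m$ is genuinely a disjoint union of $m+1$ subgraphs in the sense defined earlier, and I would begin by writing its adjacency matrix as $A(G') = \diag\{A(G), 0, \dots, 0\}$ with $m$ trailing zero blocks.

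The next step is to apply the block-diagonal exponential identity stated in the preliminaries. Since $\exp(-\iota \cdot 0 \cdot \tau) = 1$ for each $1 \times 1$ zero block, this yields $\exp(-\iota A(G') \tau) = \diag\{\exp(-\iota A(G) \tau), 1, \dots, 1\}$. I would then observe that, because $u, v \in V(G) \subset V(G')$, the basis vectors $\ket{u}$ and $\ket{v}$ of $\mathbb{C}^{n+m}$ are supported entirely within the coordinate block corresponding to $G$; the appended identity entries sit in orthogonal coordinates and therefore never enter the inner product. Hence $\braket{v | \exp(-\iota A(G') \tau) | u} = \braket{v | \exp(-\iota A(G) \tau) | u}$. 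Invoking the PST hypothesis in $G$, the right-hand side equals $1$, so state transfer between $u$ and $v$ at time $\tau$ survives in $G'$.

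The only point demanding genuine care — and thus the closest thing to an obstacle — is the bookkeeping of vertex labels when passing from $\mathbb{C}^n$ to the enlarged space $\mathbb{C}^{n+m}$. One must confirm that adjoining the isolated vertices neither perturbs the $G$-block of the adjacency matrix nor relabels $u$ or $v$, so that the amplitude computed in the larger Hilbert space literally coincides with the original one. This is immediate from the definition of graph union given above, since each $v_i$ contributes an empty edge set and the vertex sets are kept disjoint; no coupling term linking $G$ to the new vertices is ever created. I expect the entire argument to be short, with the block-diagonal identity doing essentially all of the work.
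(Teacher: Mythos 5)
Your proposal is correct and follows essentially the same route as the paper's own proof: write $A(G') = \diag\{A(G), 0, \dots, 0\}$, exponentiate blockwise, and note that $\ket{u}, \ket{v}$ are supported only on the $G$-block so the extra entries never contribute to the amplitude. In fact your evaluation of the trailing diagonal entries as $\exp(-\iota\cdot 0\cdot\tau) = 1$ is the correct one, whereas the paper writes these entries as $-\iota\tau$ (a slip that does not affect the conclusion, since those coordinates are annihilated by the zero components of $\ket{U}$ and $\ket{V}$).
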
		
	
		\begin{proof}
			As there is a PST between $u$ and $v$ at time $t = \tau$, there are vectors $\ket{u} = [0, 0, \dots 1(u\text{-th position}), \dots 0]^\dagger$ and $\ket{v} = [0, 0, \dots 1(v\text{-th position}), \dots 0]^\dagger$ in $\mathbb{C}^n$, such that, $\braket{v | \exp(-\iota A(G) \tau) | u} = 1$. Note that, $A(G')$ is a block diagonal matrix given by $A(G') = \diag\{A(G), 0, 0, \dots 0 (m\text{- times})\}$. Corresponding to the vertices $u$ and $v$ define new vectors $\ket{U}$ and $\ket{V}$ in $\mathbb{C}^{n + m}$ such that $\ket{U} = [\bra{u}, 0, 0, \dots 0 (m\text{- times})]^\dagger$, and $\ket{V} = [\bra{v}, 0, 0, \dots 0 (m\text{- times})]^\dagger$. Now,
			\begin{equation}
				\begin{split}
					& \exp(-\iota A(G') \tau) = \diag\{\exp(-\iota A(G) \tau), -\iota \tau, -\iota \tau, \dots -\iota\tau(m\text{- times})\} \\
					\text{or}~ & \braket{V | \exp(-\iota A(G') \tau)| U } = \braket{V | \diag\{\exp(-\iota A(G) \tau), -\iota \tau, -\iota \tau, \dots -\iota\tau(m\text{- times})\} | U } \\
					& \hspace{3.4cm} = \braket{v | \exp(-\iota A(G) \tau) | u} -\iota \tau \times 0 -\iota \tau \times 0 \dots -\iota\tau \times 0(m\text{- times}) = 1
				\end{split}
			\end{equation}
			Therefore, there is a state transfer from vertex $u$ to $v$ in the new graph $G'$.
		\end{proof}
		
		\begin{lemma}\label{multi_PST}
			Let $G_1, G_2, \dots G_m$ be the graphs allowing PSTs between the vertices $u_i$ and $v_i$ at time $\tau_i$, where $u_i$ and $v_i \in V(G_i)$ for $i = 1, 2, \dots m$. Then, there are PSTs between $u_i$ and $v_i$ at time $\tau_i$ in the graph $G = G_1 \cup G_2 \cup \dots \cup G_m$.
		\end{lemma}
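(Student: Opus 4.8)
The plan is to reduce this to the block-diagonal structure of the adjacency matrix of a disjoint union, exactly as in the proof of Lemma \ref{single_PST}, but now tracking a separate PST inside each block. First I would invoke the fact, already recorded in the preliminaries, that for the disjoint union $G = G_1 \cup G_2 \cup \dots \cup G_m$ the adjacency matrix is block diagonal, $A(G) = \diag\{A(G_1), A(G_2), \dots, A(G_m)\}$, and hence by the exponential identity established earlier, $\exp(-\iota A(G) \tau) = \diag\{\exp(-\iota A(G_1)\tau), \dots, \exp(-\iota A(G_m)\tau)\}$ for every real $\tau$.

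Next I would fix an index $i$ and set up the embedded basis vectors. Writing $n_j = |V(G_j)|$ and $n = \sum_{j=1}^m n_j$, the vertex $u_i \in V(G_i)$ becomes a vertex of $G$, to which I assign the standard basis vector $\ket{U_i} \in \mathbb{C}^n$ that carries the entry $\ket{u_i}$ inside the $i$-th block and $0$ in every other block; I define $\ket{V_i}$ analogously from $\ket{v_i}$. Because $\ket{U_i}$ and $\ket{V_i}$ are supported entirely within the $i$-th block, evaluating the evolution at the specific time $\tau_i$ and using the block-diagonal form means that only the $i$-th diagonal block contributes to the inner product, leaving $\braket{V_i | \exp(-\iota A(G)\tau_i) | U_i} = \braket{v_i | \exp(-\iota A(G_i)\tau_i) | u_i} = 1$ by the PST hypothesis on $G_i$. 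This is precisely the assertion that $G$ admits PST between $u_i$ and $v_i$ at time $\tau_i$, and since $i$ was arbitrary the conclusion follows for all $m$ pairs.

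The step that needs the most care is the bookkeeping of the times rather than any hard estimate: the individual transfers occur at possibly distinct times $\tau_1, \dots, \tau_m$, so I would evaluate the full evolution operator separately at each $\tau_i$ and argue that the $i$-th block alone is responsible for the $(u_i, v_i)$ transfer. The only genuine subtlety is to record that when the PST times coincide the \emph{same} single evolution operator realizes several transfers at once, which is immediate from the identical block-diagonal computation and is exactly what makes the simultaneous multi-pair routing possible later in the paper.
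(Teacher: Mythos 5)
Your proposal is correct and follows essentially the same route as the paper: exploit the block-diagonal form of $A(G)$ and hence of $\exp(-\iota A(G)\tau_i)$, embed $\ket{u_i}$ and $\ket{v_i}$ as vectors supported in the $i$-th block, and observe that the inner product reduces to $\braket{v_i | \exp(-\iota A(G_i)\tau_i) | u_i} = 1$. Your remark that a common time $\tau$ lets a single evolution realize several transfers at once matches the paper's own follow-up discussion after the lemma.
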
		
		
		\begin{proof}
			Let the graph $G_i$ has $n_i$ vertices for $i = 1, 2, \dots m$. Then $G$ has $n = n_1 + n_2 + \dots n_m$ vertices. Corresponding to the vertices $u_i$ and $v_i$ we define the vectors of dimension $n_i$ such that $\ket{u_i} = [0, 0, \dots 1(u_i\text{-th position}), \dots 0]^\dagger$ and $\ket{v_i} = [0, 0, \dots 1(v_i\text{-th position}), \dots 0]^\dagger \in \mathbb{C}^{n_i}$. As there is a PST from $u_i$ to $v_i$ we can write $\braket{v_i | \exp(-\iota A(G_i) \tau_i) | u_i} = 1$. Corresponding to a vector $\ket{u_i} \in \mathbb{C}^{n_i}$ construct a new vector $\ket{U_i} \in \mathbb{C}^{n}$ where $\ket{U_i} = [\bra{0_1}, \bra{0_2}, \dots \bra{u_i}, \dots \bra{0_m}]^\dagger$, where $\bra{0_j}$ is all zero vector of dimension $n_j$. Now
			\begin{equation}
				\begin{split}
					& \braket{V_i | \exp(-\iota A(G) \tau_i) | U_i} = \braket{V_i | \exp(-\iota \diag\{A(G_1), A(G_2), \dots A(G_i) \dots A(G_m)\} \tau_i) | U_i} \\
					& = \braket{V_i | \diag\{\exp(-\iota A(G_1)\tau_i), \exp(-\iota A(G_2)\tau_i), \dots, \exp(-\iota A(G_i)\tau_i), \dots \exp(-\iota A(G_m)\tau_i) \} | U_i} \\
					& = \braket{0_1 | \exp(-\iota A(G_1)\tau_i) | 0_1} + \braket{0_2 | \exp(-\iota A(G_2)\tau_i) | 0_2} + \dots + \braket{v_i | \exp(-\iota A(G_i)\tau_i) | u_i} \\
					& \hspace{5cm} + \dots + \braket{0_m | \exp(-\iota A(G_m)\tau_i) | 0_m} \\
					& = 0 + 0 + \dots + 1 + \dots + 0 = 1. 
				\end{split}
			\end{equation}
			Therefore, there is a PST between vertices $u_i$ and $v_i$ in the graph $G$.
		\end{proof}
	
		Lemma \ref{multi_PST} suggests that perfect state transfer is possible simultaneously between two vertices in a connected component of $G$. Let $G_1, G_2, \dots G_m$ be the graphs allowing PSTs between the vertices $u_i$ and $v_i \in V(G_i)$ at time $\tau$ for all $i = 1, 2, \dots m$. Then, there are simultaneous PSTs between $u_i$ and $v_i$ at time $\tau$ in the graph $G = G_1 \cup G_2 \cup \dots \cup G_m$. Therefore, switching off a number of interactions we can generate multiple connected components allowing PSTs between vertices simultaneously. This process modifies the interaction Hamiltonian. 
		
		It is practically difficult to keep the coupling strengths equal throughout the PST process. Therefore, the state transfer cost is proportional to the number of edges in the subgraph. To reduce the state transfer cost, we should construct the subgraphs with the minimum number of edges. Among the graphs performing PST at time $\frac{\pi}{\sqrt{2}}$ at a distance $2$ depicted in the the figure \ref{graphs_with_PST} path graph of length $2$ has minimum number of edges. 
		
		Now we discuss how to use Hamiltonian engineering to transfer information from a sender to a receiver located at the vertices $s$ and $r$, respectively. For simplicity, we assume that this switching operation on the edges is not time consuming. Consider a sequence of vertices $s = s_0, s_1, s_2, \dots s_p = r$ and a sequence of subgraphs $G_0, G_1, \dots G_{p - 1}$ such that $s_0 \in V(G_0)$; $s_j \in V(G_{j - 1}) \cap V(G_{j})$ for $j = 1, 2, \dots (p - 1)$; and $s_p \in G_{p - 1}$. In addition, $G_j$ allows PST from $s_j$ to $s_{j + 1}$ for $j = 0, 1, 2, \dots {p - 1}$.  During PST in the subgraph $G_j$ all the edges in $E(G) - E(G_j)$ will remain switched off. This operation converts the graph $G$ to another graph $G'$ which is $G_j$ union a number of isolated vertices. Lemma \ref{single_PST} suggests that a PST is possible in $G'$ from $s_{j}$ to $s_{j + 1}$. In the next step, we switched off the edges in $G_j$ and switched on the edges in $G_{j + 1}$, which makes the next PST possible. Therefore, we need $p$ number of PSTs to transfer the information from $s$ to $r$ using the sequence of subgraphs $\{G_j\}$. We describe this process with an example in figure \ref{example_state_transition}. These ideas influence the development of quantum routing and procedures for generating scalable quantum communication network, which we shall discuss in the next two sections. 
		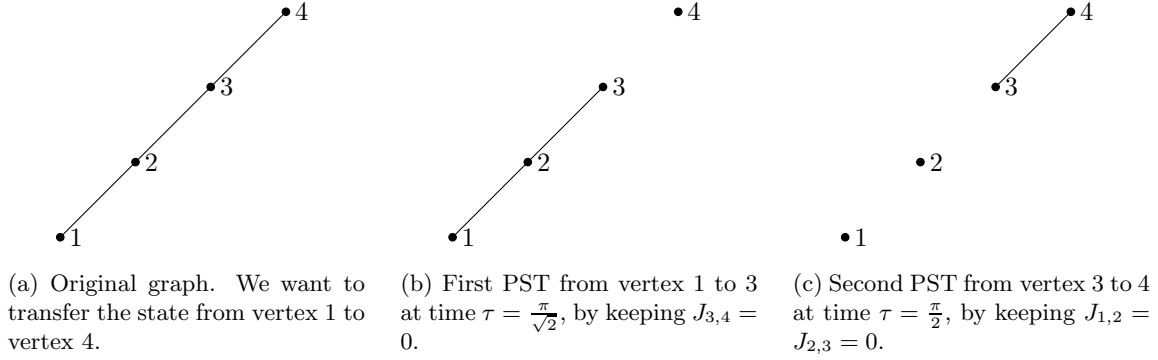
\begin{figure}
			\centering 
			\begin{subfigure}{.27 \textwidth}
				\centering 
				\begin{tikzpicture}
				\draw [fill] (0, 0) circle [radius = .5mm];
				\node [right] at (0, 0) {$1$};
				\draw [fill] (1, 1) circle [radius = .5mm];
				\node [right] at (1, 1) {$2$};
				\draw [fill] (2, 2) circle [radius = .5mm];
				\node [right] at (2, 2) {$3$};
				\draw [fill] (3, 3) circle [radius = .5mm];
				\node [right] at (3, 3) {$4$};
				\draw (0, 0) -- (3, 3);
				\end{tikzpicture}
				\caption{Original graph. We want to transfer the state from vertex $1$ to vertex $4$.}
			\end{subfigure}
			\quad 
			\begin{subfigure}{.27 \textwidth}
				\centering 
				\begin{tikzpicture}
				\draw [fill] (0, 0) circle [radius = .5mm];
				\node [right] at (0, 0) {$1$};
				\draw [fill] (1, 1) circle [radius = .5mm];
				\node [right] at (1, 1) {$2$};
				\draw [fill] (2, 2) circle [radius = .5mm];
				\node [right] at (2, 2) {$3$};
				\draw [fill] (3, 3) circle [radius = .5mm];
				\node [right] at (3, 3) {$4$};
				\draw (0, 0) -- (2, 2);
				\end{tikzpicture}
				\caption{First PST from vertex $1$ to $3$ at time $\tau = \frac{\pi}{\sqrt{2}}$, by keeping $J_{3, 4} = 0$.}
			\end{subfigure}
			\quad 
			\begin{subfigure}{.27 \textwidth}
				\centering 
				\begin{tikzpicture}
				\draw [fill] (0, 0) circle [radius = .5mm];
				\node [right] at (0, 0) {$1$};
				\draw [fill] (1, 1) circle [radius = .5mm];
				\node [right] at (1, 1) {$2$};
				\draw [fill] (2, 2) circle [radius = .5mm];
				\node [right] at (2, 2) {$3$};
				\draw [fill] (3, 3) circle [radius = .5mm];
				\node [right] at (3, 3) {$4$};
				\draw (2, 2) -- (3, 3);
				\end{tikzpicture}
				\caption{Second PST from vertex $3$ to $4$ at time $\tau = \frac{\pi}{2}$, by keeping $J_{1, 2} = J_{2, 3} = 0$.}
			\end{subfigure}
			\caption{A path graph of length $4$ does not support PST between vertices $1$ and $4$. But, we can communication between them using two PST.}
			\label{example_state_transition} 
		\end{figure}

	\section{Construction of scalable network allowing communication between any two vertices}
		
		Our objective for constructing a network is twofold. We place a sender and a receiver at two distinct vertices of a graph. First, we want to enable them to communicate using a sequence of PSTs. Next, we can increase the number of nodes in the graph to make the network scalable. 
		
		\begin{definition}
			We define a graph as $p$-PST if we can communicate between any two vertices in it using at most $p$ number of perfect state transfers.
		\end{definition}
		A few examples of $1$-PST graphs are available in figure \ref{graphs_with_PST}. They need no switching operation for information transfer. All complete graphs, complete bipartite graphs, star graphs, wheel graphs, friendship graphs, and Petersen graphs are $1$-PST. Examples of $2$ PST graphs are path graphs with $4$ or $5$ vertices and cycle graphs with $6$ to $9$ vertices. The following lemma relates the diameter of the graph and the number of required PSTs to communicate between any two vertices.

		\begin{lemma}
			Any planar graph with a diameter at most $2p$ is $p$-PST. Also, the state transfer time between any two vertices representing a sender and a receiver is at most $\frac{p \pi}{\sqrt{2}}$.
		\end{lemma}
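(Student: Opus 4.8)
The plan is to reduce the problem to the single-path state-transfer procedure already described before Figure \ref{example_state_transition}, using the two smallest PST building blocks of Figure \ref{graphs_with_PST}: the single edge $P_2$, which performs PST across distance $1$ in time $\frac{\pi}{2}$, and the path $P_3$, which performs PST across distance $2$ in time $\frac{\pi}{\sqrt{2}}$. I would fix any two vertices $s$ and $r$ to act as sender and receiver. Since the diameter of $G$ is at most $2p$, we have $d(s, r) \le 2p$, so there exists a shortest path $s = v_0, v_1, \dots, v_\ell = r$ in $G$ with $\ell = d(s, r) \le 2p$.

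Next I would partition this path into $\lceil \ell / 2 \rceil$ consecutive segments, namely the blocks $\{v_0, v_1, v_2\}, \{v_2, v_3, v_4\}, \dots$, each of length $2$, with a single length-$1$ block left over when $\ell$ is odd. Each length-$2$ block, together with its two edges $(v_{2j}, v_{2j+1})$ and $(v_{2j+1}, v_{2j+2})$, is a copy of $P_3$ and hence supports PST between its endpoints $v_{2j}$ and $v_{2j+2}$; any leftover length-$1$ block is a copy of $P_2$ supporting PST across its edge. Because $\ell \le 2p$, the number of blocks satisfies $\lceil \ell / 2 \rceil \le p$, and consecutive blocks overlap in exactly the one shared vertex $v_{2j}$, which matches the chain condition $s_j \in V(G_{j-1}) \cap V(G_j)$ required by the switching procedure.

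I would then run the blocks sequentially. For the $j$-th block I switch off every edge of $G$ except the (at most two) edges of that block; by Lemma \ref{single_PST} the resulting graph, which is the active $P_3$ or $P_2$ together with isolated vertices, still carries out the PST from $v_{2j}$ to $v_{2(j+1)}$ unchanged. Relaying the active block from one segment to the next moves the state along the path, so after at most $p$ such transfers the state reaches $r$, proving that $G$ is $p$-PST. For the time bound, every block costs at most $\frac{\pi}{\sqrt{2}}$, since a $P_3$ transfer costs exactly $\frac{\pi}{\sqrt{2}}$ and a $P_2$ transfer costs $\frac{\pi}{2} < \frac{\pi}{\sqrt{2}}$; summing over at most $p$ blocks gives total time at most $\frac{p\pi}{\sqrt{2}}$, with equality approached when $\ell = 2p$ and all blocks have length $2$.

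The step I expect to need the most care is checking that the switched graph in each stage really is a disjoint union of one active block and isolated vertices, so that Lemma \ref{single_PST} applies verbatim. In particular one should confirm that choosing a \emph{shortest} path suffices: even if $G$ carries extra chords among the three vertices of a block, switching those chords off removes them and leaves a clean $P_3$, so neither the PST time nor its fidelity is affected. Planarity itself plays no essential role in the argument beyond keeping $G$ within the intended regime of the paper, since the only subgraphs actually activated are $P_2$ and $P_3$, which are planar automatically.
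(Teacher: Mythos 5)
Your proposal is correct and follows essentially the same route as the paper: take a shortest path between the two vertices, cut it into length-$2$ blocks realizing $P_3$ transfers at time $\frac{\pi}{\sqrt{2}}$ each (with a $P_2$ block for an odd leftover), and invoke Lemma \ref{single_PST} after switching off all other edges. You are in fact slightly more careful than the paper's own proof, which only treats a diametral pair and an even-length path, leaving the odd case to the remark that follows the lemma.
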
 
	
		\begin{proof}
			The diameter of the graph is $2p$. Therefore, three are at least two vertices $u$ and $v$, such that $d(u, v) = 2p$ and a path consists of the vertices  $u = u_0, u_1, u_2, \dots u_{2p} = v$. Now break the path from $u$ to $v$ into paths of length $2$, which are $G_0 = \{u_0, u_1, u_2\}, G_1 = \{u_2, u_3, u_4\}, \dots, G_{p - 1} = \{u_{2p - 2}, u_{2p - 1}, u_{2p}\}$. Using the graph $G_i$ we perform a PST from $u_{2i}$ to $u_{2i + 2}$ at time $\tau = \frac{\pi}{\sqrt{2}}$ by keeping all edges in $E(G) - E(G_i)$ switched off. In this way, we transfer the state from $u$ to $v$ using at most $p$ PSTs at time $\frac{p \pi}{\sqrt{2}}$.
		\end{proof}
			
		The above lemma suggests that if the planar graph $G$ has diameter $d$ then we need at most $\frac{d}{2}$ PST if $d$ is an even number. Here, state transfer time is $\frac{d \pi}{\sqrt{2}}$. If $d$ is an odd number then we need at most $\left[\frac{d}{2} \right] + 1$ PSTs. The state transfer time is $\left[\frac{d}{2} \right]\frac{\pi}{\sqrt{2}} + \frac{\pi}{2}$, where $\left[\frac{d}{2} \right]$ denotes the greatest integer not greater than $\frac{d}{2}$. In this case, we do $\left[\frac{d}{2} \right]$ PSTs of length $2$. Each of them takes time $\frac{\pi}{\sqrt{2}}$ and $1$ PST with length $1$ that takes time $\frac{\pi}{2}$. 
	
		Now, we describe two procedures to build up an infinitely scalable network supporting PST between any two vertices. We start the process with an initial graph $G_0$ with a fixed planar embedding. From $G_0$ we create an infinite sequence of planar graphs $G_i$ supporting quantum communication between any two of its vertices. The construction of infinite families of planar graphs with fixed diameter is studied in \cite{hell1993largest}. To discuss the first procedure we introduce the idea of the glued graph, which is a new graph $G$ generated from two given graphs $G_1$ and $G_2$ by glueing a vertex of $G_1$ with a vertex of $G_2$ \cite{lovasz2006graph, contreras2020gluing}.
		\begin{definition}
			\textbf{Glued graph}: Let $G_1$ and $G_2$ be two graphs such that $V(G_1) = \{v_{1,1}, v_{1,2}, \dots v_{1,n_1}\}$ and $V(G_2) = \{v_{2,1}, v_{2,2}, \dots v_{2,n_2}\}$. We glue $v_{1,n_1}$ and $v_{2,1}$ two build up a new graph $G$ with $V(G) = \{v_i: i = 1, 2, \dots (n_1 + n_2 - 1)\}$ where $v_i = v_{1, i}$ for $1 \leq i \leq n_1$ and $v_i = v_{2, i - n_1 + 1}$ where $n_1 \leq i \leq (n_1 + n_2 -1)$. Also, $E(G) = \{(v_i, v_j): (v_{1, i}, v_{1, j}) \in E(G_1) ~\text{and}~ 1 \leq i, j \leq n_1 \} \cup \{(v_i, v_j): (v_{2, i - n_1 + 1}, v_{2, j - n_1 + 1}) \in E(G_2) ~\text{and}~ n_1 \leq i, j \leq (n_1 + n_2 - 1) \}$. We denote $G = G_1 \curlywedge^{1,n_1}_{2,1} G_2$.
		\end{definition}
		Consider planar graphs $G_1$ and $G_2$, such that $v_{1, n_1} \in C(G_1)$ and $v_{2,1} \in C(G_2)$. It is easy to observe that if $v_{2,1}$ lies at the boundary of the infinite face of $G_2$, then the glued graph $G_1 \curlywedge^{1,n_1}_{2,1} G_2$ is a planar graph. Let $r_i$ and $d_i$ be the radius and diameter of the graphs $G_i$, respectively for $i = 1$ and $2$. Then, the diameter of $G_1 \curlywedge^{1,n_1}_{2,1} G_2$ will be $\max\{d_1, d_2, r_1 + r_2\}$.
		
		\begin{procedure}\label{procedure1}
			Let the initial graph $G_0$ has $n_0$ vertices and $v_{0, n_0} \in C(G_0)$. Consider the planar graphs $G_i$ such that $v_{i, 1} \in C(G_i)$ and $v_{i, 1}$ lies at the boundary of infinite face of $G_i$ for $i = 1, 2, \dots $. Define $G_0' = G_0$. In the $i$-th iteration generate a new graph $G_i' = G_{i - 1}' \curlywedge^{0,n_0}_{i,1} G_i$ for $i = 1, 2, 3, \dots$.
		\end{procedure}
		
		In procedure \ref{procedure1}, if $G_i$ contains $n_i$ vertices for $i = 0, 1, 2, \dots$, then number of vertices in $G_k'$ will be $\sum_{i = 0}^k n_j - k$. Clearly, $G_i$ is a subgraph of $G_k'$ for all $i = 0, 1, 2, \dots k$. Let the radius and diameter of $G_i$ are $r_i$ and $d_i$, respectively. Consider any two vertices $u$ and $v$ of $G_k'$. If $u$ and $v$ in $V(G_i)$ for some $i$, then we can communicate between them using at most $\left[\frac{d_i}{2}\right] + 1$ PSTs. If $u$ and $v$ belong to $G_i$ and $G_j$, respectively, for $i \neq j$, then we can construct a path between $u$ and $v$ via $v_{0,n_0}$. Length of the path is at most $r_i + r_j$, as $v_{0,n_0}$ is a central vertex of both $G_i$ and $G_j$. The number of required PSTs is $\left[\frac{r_i + r_j}{2}\right] + 1$.
		
		For example, let $G_0$ be a graph with only one vertex $v_{0, 1}$ and $G_i = C_4$, the cycle graph with $4$ vertices for $i = 1, 2, \dots$. Note that, the single vertex in $G_0$ is trivially a central vertex. Any vertex of $C_4$ is a central vertex and all of them lies at the boundary of the infinite face. We glue a vertex of $C_4$ with $v_{0, 1}$ in every step of the iteration. The new graph is depicted in figure \ref{example_of_proedure_1}. Here, we need at most two PSTs to transfer information between any two vertices. Also, the maximum state transfer time is $\sqrt{2} \pi$.
		\begin{figure}
			\begin{subfigure}{.45\textwidth}
				\centering
				\begin{tikzpicture}
					\draw [fill] (0, 0) circle [radius = .5mm];
					\node [above right] at (0, 0) {$v_{0, 1}$};
					\draw [fill] (2, 0) circle [radius = .5mm];
					\draw [fill] (1.5, .5) circle [radius = .5mm];
					\draw [fill] (1.5, -.5) circle [radius = .5mm];
					\draw (0,0) -- (1.5, .5) -- (2, 0) -- (1.5, -.5) -- (0, 0);
					\draw [fill] (-2, 0) circle [radius = .5mm];
					\draw [fill] (-1.5, .5) circle [radius = .5mm];
					\draw [fill] (-1.5, -.5) circle [radius = .5mm];
					\draw (0,0) -- (-1.5, .5) -- (-2, 0) -- (-1.5, -.5) -- (0, 0);
					\draw [fill] (0, 2) circle [radius = .5mm];
					\draw [fill] (.5, 1.5) circle [radius = .5mm];
					\draw [fill] (-.5, 1.5) circle [radius = .5mm];
					\draw (0, 0) -- (.5, 1.5) -- (0, 2) -- (-.5, 1.5) -- (0, 0);
					\draw [fill] (0, -2) circle [radius = .5mm];
					\draw [fill] (.5, -1.5) circle [radius = .5mm];
					\draw [fill] (-.5, -1.5) circle [radius = .5mm];
					\draw (0, 0) -- (.5, -1.5) -- (0, -2) -- (-.5, -1.5) -- (0, 0);
				\end{tikzpicture}
				\caption{Example of procedure 1: The initial graph $G_0$ consists of a single vertex $a$. We add $C_4$ graphs with it. All the new graphs are $2$-PST.}
				\label{example_of_proedure_1} 
			\end{subfigure}
			\hspace{1cm}
			\begin{subfigure}{.45\textwidth}
				\centering 
				\begin{tikzpicture}
					\draw [fill] (0, 1) circle [radius = .5mm];
					\node [above] at (0, 1) {$u$};
					\draw [fill] (0, -1) circle [radius = .5mm];
					\node [below] at (0, -1) {$v$};
					\draw [fill] (1, 0) circle [radius = .5mm];
					\draw (0, 1) -- (1, 0) -- (0, -1);
					\draw [fill] (-1, 0) circle [radius = .5mm];
					\draw (0, 1) -- (-1, 0) -- (0, -1);
					\draw [fill] (2, 0) circle [radius = .5mm];
					\draw (0, 1) -- (2, 0) -- (0, -1);
					\draw [fill] (-2, 0) circle [radius = .5mm];
					\draw (0, 1) -- (-2, 0) -- (0, -1);
					\draw [fill] (3, 0) circle [radius = .5mm];
					\draw (0, 1) -- (3, 0) -- (0, -1);
					\draw [fill] (-3, 0) circle [radius = .5mm];
					\draw (0, 1) -- (-3, 0) -- (0, -1);
					\draw [fill] (4, 0) circle [radius = .5mm];
					\draw (0, 1) -- (4, 0) -- (0, -1);
					\draw [fill] (-4, 0) circle [radius = .5mm];
					\draw (0, 1) -- (-4, 0) -- (0, -1);
					\node [left] at (-4, 0) {$\dots$};
					\node [right] at (4, 0) {$\dots$};
				\end{tikzpicture}
				\caption{Example of procedure \ref{procedure2}: The graph $G_0 = C_4$. We add new vertices in the graph and join that with $u$ and $v$ to construct $G_1, G_2, \dots$. All the new graphs are $1$-PST.}
				\label{example_of_proedure_2}
			\end{subfigure}
			\caption{Construction of scalable planar graphs allowing communication using PSTs}
		\end{figure}
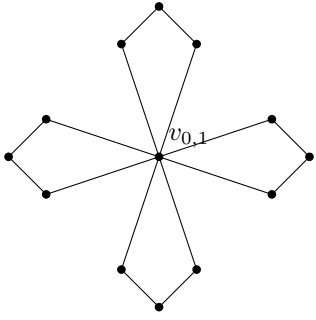
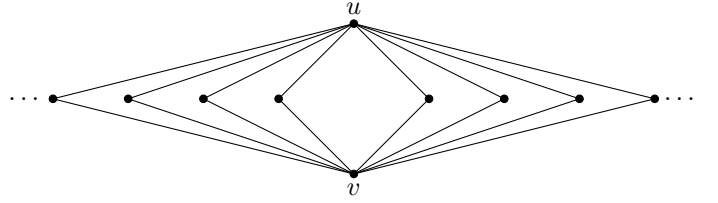
		
		This procedure increases the degree of vertex $v_{0, n_1}$ in every step which acts as a hub in the communication network. This may be considered a drawback of this procedure. Removing this vertex will generate a large number of disjoint subgraphs and which will reduce the efficiency of the communication network.
		
		\begin{procedure}\label{procedure2}
			Let $u$ and $v$ be any two vertices on the boundary of infinite face of the graph $G_0$. Construct new graphs $G_i = (V(G_i), E(G_i))$ where $V(G_i) = V(G_{i - 1}) \cup \{v_i\}$ and $E(G_i) = E(G_{i - 1}) \cup \{(u, v_i), (v, v_i)\}$ for $i = 1, 2, \dots$.
		\end{procedure}
	
		Procedure \ref{procedure2} is advantageous as compared to procedure \ref{procedure1}. If $d_i$ is diameter of the sequence of graphs $G_i$ generated by procedure \ref{procedure2} then we can prove $d_{i + 1} \geq d_i$ for $i = 0, 1, 2, \dots$. It reduces the number of PSTs when the size of network increases and $d_{i + 1} > d_i$ for some $i$. Moreover, it generates two hubs $u$ and $v$. An example of procedure \ref{procedure2} is depicted in figure \ref{example_of_proedure_2}. Here, $G_0 = C_4$. In every step, we add new vertices with two particular vertices $u$ and $v$ in $G_0$.

	\section{Routing in arbitrary planar graph}
		
		In classical communication, a planar routing problem is defined on a planar graph $G$ with $n$ vertices and a fixed embedding in a plane \cite{frederickson1989efficient}. A net $N_i$ is a pair of vertices $(s_i, r_i)$ on the boundary of the infinite face of $G$. Here $s_i$ and $r_i$ denotes the $i$-th sender and receiver in the graph. The set containing all considered nets is $Ne = \{N_i: i = 1, 2, \dots q\}$. The classical routing problem is to find a set of pairwise edge-disjoint paths $P_i$ corresponding to the net $N_i \in Ne$, such that $P_i$ connects the two terminals $s_i$ and $r_i$ of $N_i$, simultaneously for all $i$ \cite{becker1986algorithms}. Here, the solver wants to find the shortest paths joining the two ends of the nets. In static routing, we can not add or remove any vertex or edge throughout the communication process. 
	
		We reformulate the routing problem to solve it in the domain of quantum communication. Here also, we cannot add or remove any vertex or an edge in the graph permanently. But, to perform a sequence of PSTs we keep a number of edges switched off in a particular time interval. Given a planar graph and a set of nets $Ne$ the quantum routing problem is to communicate between the terminals of the nets using PSTs. The solution of this problem contains a sequences of subgraphs $\{G_j\}_{j = 0, 1, \dots p}$ supporting PST with the following characteristics:
		\begin{enumerate}
			\item
				For every net corresponding to $i = 1, 2, \dots q$ there is a sequence of vertices $s_i = s_{i, 0}, s_{i, 1}, s_{i, 2}, \dots s_{i, p} = r_i$ and a sequence of subgraphs $G_0, G_1, \dots G_{p - 1}$ such that $s_{i, 0} \in G_0$; $s_{i, j} \in G_{j - 1} \cap G_j$ for $j = 1, 2, \dots (p - 1)$; and $s_{i, p} \in G_p$.
			\item
				The subgraph $G_j$ either allows PST from $s_{i, j}$ to $s_{i, j+1}$ or keeps the state unmoved at $s_{i, j}$.
			\item
				In $G_j$ we have $s_{i, j} \neq s_{k, j}$ and $s_{i, (j+1)} \neq s_{k, (j+1)}$. Note that $G_j$ allows PST from $s_{i, j}$ to $s_{i, (j+1)}$, and from $s_{k, j}$ to $s_{k, (j+1)}$ simultaneously. Either $s_{i, j} = s_{k, j}$, or $s_{i, (j+1)} = s_{k, (j+1)}$ will make PST impossible.
		\end{enumerate}
		We represent the quantum routing process with a routing table. Corresponding to every net we make a row where we register the sequence of vertices from sender to the receiver. We use the subgraphs $G_j$ to perform $j$-th PST, which correspond a column in the routing table. If the subgraph $G_j$ allows PST from $s_{i, j}$ to $s_{i, j+1}$, we denote $(s_{i, j}, s_{i, j+1})$ in $i$-th row and $j$-th column. If $G_j$ keeps the state at $s_{i, j}$ unmoved we mark $\{s_{i,j}\}$ in $i$-th row and $j$-th column. Note that $G_0$ and $G_p$ are the subgraphs without any edge.
		
		Now, we present the examples to demonstrate that quantum routing is more efficient than its classical counterpart. Consider the graph in figure \ref{routing with C4} with two nets $N_1 = (1, 5)$ and $N_2 = (6, 4)$. Note that, there are no edge-disjoint paths $P_1$ and $P_2$ joining the terminals points of $N_1$ and $N_2$, respectively. Therefore, simultaneous classical communication between the terminals of $N_1$ and $N_2$ is not possible for the graph in \ref{routing with C4}. We can transfer information using PST from $1$ to $2$ and $6$ to $7$ simultaneously keeping all edges switched off except $(1, 2)$ and $(6, 7)$. Then, we switched off the edges $(1, 2)$ and $(6, 7)$ and switched on the edges in the hypercube which is the induced subgraph generated by the vertices $\{2, 3, 7, 8\}$. Simultaneous PST is possible from $2$ to $8$ and $7$ to $3$, in this hypercube. Next, we switched off the hypercube and switched on the edges $(3, 4)$ and $(8, 5)$ to perform PST from $3$ to $4$ and from $8$ to $5$. Therefore, this graph supports simultaneous quantum communications between the terminals of $N_1$ and $N_2$. The quantum routing is described in the following routing table:
		\begin{center}
			\begin{tabular}{| c | c| c |c | c | c | c |}
				\hline
				& & \multicolumn{5}{|c|}{Subgraphs for PSTs} \\
				\hline 
				\multicolumn{2}{|c|}{Net} & $G_0$ & $G_1$ & $G_2$ & $G_3$ & $ G_4$ \\
				\hline
				$i = 1$ & $(1, 5)$ & $\{1\}$ & $(1, 2)$ & $(2, 8)$ & $(8, 5)$ & $\{5\}$ \\
				\hline
				$i = 2$ & $(6, 4)$ & $\{6\}$ & $(6, 7)$ & $(7, 3)$ & $(3, 4)$ & $\{4\}$\\
				\hline  
			\end{tabular}
		\end{center}
		Here, $G_1$ is a subgraph with two edges $(1, 2)$ and $(6, 7)$ as well as isolated vertices $3, 4, 8$ and $5$. Also, $G_2$ is the hypercube consists of four vertices $\{2, 3, 7, 8\}$ as well as four isolated vertices $1$, $4$, $5$ and $6$. For the final PST we use the subgraph $G_3$. It has edges $(3, 4)$ and $(8, 5)$ and isolated vertices $1$, $2$, $6$ and $7$. The initial and final graphs $G_0$ and $G_4$ have no edge in them.
		
		Similarly, in the graph of figure \ref{routing with H3} consider nets $N_1 = (8, 12), N_2 = (9, 13), N_3 = (15, 10)$ and $N_4 = (14, 11)$. Edge disjoint paths joining the terminals of these nets are impossible. Therefore, we can not perform classical communication between the terminals of these nets simultaneously. Hence, classical routing in planar graph is impossible for the graph in \ref{routing with H3}. But, we can do quantum routing similarly, as enunciated in the table below:
		\begin{center} 
			\begin{tabular}{| c | c | c | c | c | c| c | c | c |}
				\hline 
				& & \multicolumn{7}{|c|}{Subgraphs for PSTs} \\
				\hline 
				\multicolumn{2}{|c|}{Net} & $G_0$ & $G_1$ & $G_2$ & $G_3$ & $G_4$ & $G_5$ & $G_6$\\
				\hline 
				$i = 1$ & $(8, 12)$ & $\{8\}$ & $(8, 3)$ & $\{3\}$ & $(3, 6)$ & $(6, 12)$ & $\{12\}$ & $\{12\}$\\
				\hline 
				$i = 2$ & $(9, 13)$ & $\{9\}$ & $\{9\}$ & $(9, 4)$ & $(4, 1)$ & $\{1\}$ & $(1, 13)$ & $\{13\}$\\
				\hline
				$i = 3$ & $(15, 10)$ & $\{15\}$ & $(15, 7)$ & $\{7\}$ & $(7, 0)$ & $(0, 10)$ & $\{10\}$ & $\{10\}$\\
				\hline
				$i = 4$ & $(14, 11)$ & $\{14\}$ & $\{14\}$ & $(14, 2)$ & $(2, 5)$ & $\{5\}$ & $(5, 11)$ & $\{11\}$\\
				\hline 
			\end{tabular}
		\end{center} 
		Note that, the subgraph $G_3$ is the hypercube with $8$ vertices depicted in \ref{hypercube_with_8_vertices} along with the isolated vertices $8, 9, \dots 15$. Here, we utilize PSTs on hypercube for simultaneous quantum communication between distinct senders and receivers in nets. The other $G_j$s are subgraphs consists of paths of lengths $1$ or $2$ for making state transfer between the indicated vertices in the table.
		\begin{figure}
			\begin{subfigure}{.48\textwidth} 
				\centering 
				\begin{tikzpicture} 
					\draw [fill] (1, 1) circle [radius = .5mm];
					\node [above] at (1, 1) {$3$};
					\draw [fill] (1, -1) circle [radius = .5mm];
					\node [below] at (1, -1) {$8$};
					\draw [fill] (-1, -1) circle [radius = .5mm];
					\node [below] at (-1, -1) {$7$};
					\draw [fill] (-1, 1) circle [radius = .5mm];
					\node [above] at (-1, 1) {$2$};
					\draw (1, 1) -- (1, -1) -- (-1, -1) -- (-1, 1) -- (1, 1);
					\draw [fill] (-3, 1) circle [radius = .5mm];
					\node [above] at (-3, 1) {$1$};
					\draw (-3, 1) -- (-1, 1);
					\draw [fill] (-3, -1) circle [radius = .5mm];
					\node [below] at (-3, -1) {$6$};
					\draw (-3, -1) -- (-1, -1);
					\draw [fill] (3, 1) circle [radius = .5mm];
					\node [above] at (3, 1) {$4$};
					\draw (3, 1) -- (1, 1);
					\draw [fill] (3, -1) circle [radius = .5mm];
					\node [below] at (3, -1) {$5$};
					\draw (3, -1) -- (1, -1);
				\end{tikzpicture} 
				\caption{Consider two nets $(1, 5)$ and $(6, 4)$. There is no two edge disjoint paths from $1$ to $5$ and from $6$ to $4$. Therefore, classical routing is not possible. But, we can communicate simultaneously from $1$ to $5$ and from $6$ to $4$ using a sequence of PSTs.}
				\label{routing with C4}
			\end{subfigure}
			\begin{subfigure}{.48\textwidth}
				\centering
				\begin{tikzpicture}[scale = .8]
					\draw [fill] (1, 1) circle [radius = .5mm];
					\node [below left] at (1, 1) {$0$};
					\draw [fill] (1, -1) circle [radius = .5mm];
					\node [above left] at (1, -1) {$1$};
					\draw [fill] (-1, -1) circle [radius = .5mm];
					\node [above right] at (-1, -1) {$2$};
					\draw [fill] (-1, 1) circle [radius = .5mm];
					\node [below right] at (-1, 1) {$3$};
					\draw (1, 1) -- (1, -1) -- (-1, -1) -- (-1, 1) -- (1, 1);
					\draw [fill] (2, 2) circle [radius = .5mm];
					\node [above right] at (2, 2) {$5$};
					\draw [fill] (2, -2) circle [radius = .5mm];
					\node [below right] at (2, -2) {$6$};
					\draw [fill] (-2, -2) circle [radius = .5mm];
					\node [below left] at (-2, -2) {$7$};
					\draw [fill] (-2, 2) circle [radius = .5mm];
					\node [above left] at (-2, 2) {$4$};
					\draw (2, 2) -- (2, -2) -- (-2, -2) -- (-2, 2) -- (2, 2);
					\draw (1, 1) -- (2, 2);
					\draw (1, -1) -- (2, -2);
					\draw (-1, 1) -- (-2, 2);
					\draw (-1, -1) -- (-2, -2);
					\draw [fill] (-3, 2) circle [radius = .5mm];
					\node [left] at (-3, 2) {$8$};
					\draw (-2, 2) -- (-3, 2);
					\draw [fill] (-2, 3) circle [radius = .5mm];
					\node [above] at (-2, 3) {$9$};
					\draw (-2, 2) -- (-2, 3);
					\draw [fill] (3, 2) circle [radius = .5mm];
					\node [right] at (3, 2) {$11$};
					\draw (2, 2) -- (3, 2);
					\draw [fill] (2, 3) circle [radius = .5mm];
					\node [above] at (2, 3) {$10$};
					\draw (2, 2) -- (2, 3);
					\draw [fill] (2, -3) circle [radius = .5mm];
					\node [below] at (2, -3) {$13$};
					\draw (2, -2) -- (2, -3);
					\draw [fill] (3, -2) circle [radius = .5mm];
					\node [right] at (3, -2) {$12$};
					\draw (2, -2) -- (3, -2);
					\draw [fill] (-3, -2) circle [radius = .5mm];
					\node [left] at (-3, -2) {$15$};
					\draw (-2, -2) -- (-3, -2);
					\draw [fill] (-2, -3) circle [radius = .5mm];
					\node [below] at (-2, -3) {$14$};
					\draw (-2, -2) -- (-2, -3);
				\end{tikzpicture}
				\caption{Classical routing is not possible when the nets are $(8, 12)$, $(9, 13)$, $(15, 10)$ and $(14, 11)$. But quantum communication is possible using a sequence of PSTs.}
				\label{routing with H3}
			\end{subfigure}
			\caption{In figure \ref{routing with C4} and \ref{routing with H3} the hypercubes are the induced subgraphs generated by the vertex sets $\{2, 3, 7, 8\}$ and $\{0, 1, 2, 3, 4, 5, 6, 7\}$, respectively. A hypercube graph allows PST between its antipodal vertices. It facilitates us to perform quantum routing using all nets.}
		\end{figure}
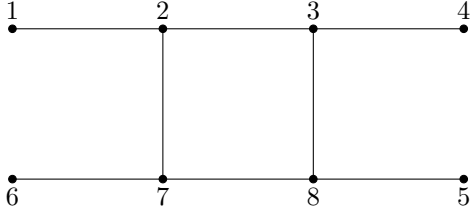
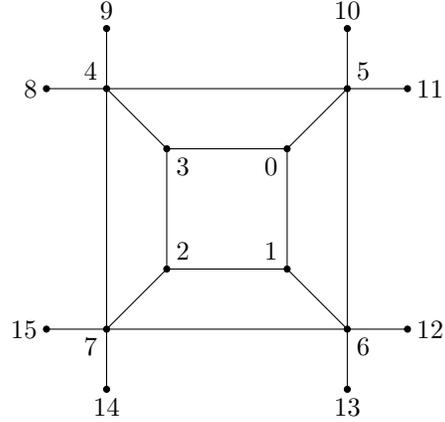

	\section{Conclusion and discussions}
		
		In this work, we develop a method for quantum communication between any two vertices of a planar graph using multiple PSTs. The foundation of multiple PSTs in a graph is Hamiltonian engineering which allows us to remove unwanted couplings between spin objects to perform a particular PST. In the language of graph theory, it is a graph switching method which switch on and off a number of edges to build up a particular subgraph allowing PST between two selected vertices. We discuss two methods to build up a scalable network where we can communicate between two arbitrarily chosen vertices using multiple PSTs. We then illustrate an idea of quantum routing on a planar graph where we keep multiple senders and receivers on the boundary of the infinite face of the graph. Our examples show that quantum routing is more efficient than their classical counterparts.
		
		Quantum routing based on PST is a well-studied topic in literature. It is proved in \cite{kay2011basics} that routing with a single state transfer between many different recipients is impossible. Therefore, multiple PST is essential for routing which we explain in this article. Performance of multiple PST needs control over the interactions. This could be experimentally challenging. In addition, it has been demonstrated that very modest controls on the interactions achieve routing with a high transfer rate \cite{kostak2007perfect, pemberton2011perfect}. In \cite{bose2009communication}, a PST based routing protocol is developed. It is proved that PST is possible between any two vertices $u$ and $v$ in the graph $K - \{(u, v)\}$, which is the complete graph excluding the particular edge $(u, v)$. If the graph $K - \{(u, v)\}$ has $n$ vertices then it has $\frac{(n+1)(n - 2)}{2}$ edges. Our procedure allows all to all communication where the graph consists of any number of edges which can be considerably less than $\frac{(n+1)(n - 2)}{2}$. Although a complete solution to the quantum routing problem in the plane will be a challenging task.

	\section*{Acknowledgment}
	
		The author acknowledges Sougato Bose and Subhashish Banerjee for some suggestions related to this work.

	\section*{Funding}
		This work is supported by the SERB funded project entitled ``Transmission of quantum information using perfect state transfer" (Grant no. CRG/2021/001834).
		

\begin{thebibliography}{10}
		
		\bibitem{benenti2004principles}
		Giuliano Benenti, Giulio Casati, and Giuliano Strini.
		\newblock {\em Principles of quantum computation and information}.
		\newblock World scientific, 2004.
		
		\bibitem{pavicic2006quantum}
		Mladen Pavicic.
		\newblock {\em Quantum Computation and Quantum Communication:: Theory and
			Experiments}.
		\newblock Springer Science \& Business Media, 2006.
		
		\bibitem{nikolopoulos2014quantum}
		Georgios~M Nikolopoulos, Igor Jex, et~al.
		\newblock {\em Quantum state transfer and network engineering}.
		\newblock Springer, 2014.
		
		\bibitem{van2014quantum}
		Rodney Van~Meter.
		\newblock {\em Quantum networking}.
		\newblock John Wiley \& Sons, 2014.
		
		\bibitem{kimble2008quantum}
		H~Jeff Kimble.
		\newblock \href{https://arxiv.org/abs/0806.4195}{The quantum internet}.
		\newblock {\em Nature}, 453(7198):1023--1030, 2008.
		
		\bibitem{caleffi2018quantum}
		Marcello Caleffi, Angela~Sara Cacciapuoti, and Giuseppe Bianchi.
		\newblock \href{https://arxiv.org/abs/1805.04360}{Quantum Internet: From
			communication to distributed computing!}
		\newblock In {\em Proceedings of the 5th ACM International Conference on
			Nanoscale Computing and Communication}, pages 1--4, 2018.
		
		\bibitem{pirandola2016physics}
		Stefano Pirandola and Samuel~L Braunstein.
		\newblock \href{https://www.nature.com/articles/532169a}{Physics: Unite to
			build a quantum Internet}.
		\newblock {\em Nature}, 532(7598):169--171, 2016.
		
		\bibitem{cacciapuoti2019quantum}
		Angela~Sara Cacciapuoti, Marcello Caleffi, Francesco Tafuri, Francesco~Saverio
		Cataliotti, Stefano Gherardini, and Giuseppe Bianchi.
		\newblock \href{https://arxiv.org/abs/1810.08421}{Quantum internet: networking
			challenges in distributed quantum computing}.
		\newblock {\em IEEE Network}, 34(1):137--143, 2019.
		
		\bibitem{divincenzo2000physical}
		David~P DiVincenzo.
		\newblock \href{https://arxiv.org/abs/quant-ph/0002077}{The physical
			implementation of quantum computation}.
		\newblock {\em Fortschritte der Physik: Progress of Physics},
		48(9-11):771--783, 2000.
		
		\bibitem{bose2003quantum}
		Sougato Bose.
		\newblock \href{https://arxiv.org/abs/quant-ph/0212041}{Quantum communication
			through an unmodulated spin chain}.
		\newblock {\em Physical review letters}, 91(20):207901, 2003.
		
		\bibitem{christandl2004perfect}
		Matthias Christandl, Nilanjana Datta, Artur Ekert, and Andrew~J Landahl.
		\newblock \href{https://arxiv.org/abs/quant-ph/0309131}{Perfect state transfer
			in quantum spin networks}.
		\newblock {\em Physical review letters}, 92(18):187902, 2004.
		
		\bibitem{bose2014spin}
		Sougato Bose, Abolfazl Bayat, Pasquale Sodano, Leonardo Banchi, and Paola
		Verrucchi.
		\newblock
		\href{http://ndl.ethernet.edu.et/bitstream/123456789/24516/1/Georgios%20M.%20Nikolopoulos_2014.pdf#page=12}{Spin
			chains as data buses, logic buses and entanglers}.
		\newblock In {\em Quantum State Transfer and Network Engineering}, pages 1--37.
		Springer, 2014.
		
		\bibitem{majer2007coupling}
		Johannes Majer, JM~Chow, JM~Gambetta, Jens Koch, BR~Johnson, JA~Schreier,
		L~Frunzio, DI~Schuster, Andrew~Addison Houck, Andreas Wallraff, et~al.
		\newblock \href{https://arxiv.org/abs/0709.2135}{Coupling superconducting
			qubits via a cavity bus}.
		\newblock {\em Nature}, 449(7161):443--447, 2007.
		
		\bibitem{huitema1995routing}
		Christian Huitema.
		\newblock {\em Routing in the Internet}.
		\newblock Prentice-Hall, Inc., 1995.
		
		\bibitem{bose2009communication}
		Sougato Bose, Andrea Casaccino, Stefano Mancini, and Simone Severini.
		\newblock \href{https://arxiv.org/abs/0808.0748}{Communication in XYZ
			all-to-all quantum networks with a missing link}.
		\newblock {\em International Journal of Quantum Information}, 7(04):713--723,
		2009.
		
		\bibitem{sadlier2019near}
		Ronald Sadlier and Travis~S Humble.
		\newblock \href{https://www.osti.gov/servlets/purl/1561585}{Near-optimal
			routing of noisy quantum states}.
		\newblock In {\em Quantum Communications and Quantum Imaging XVII}, volume
		11134, pages 28--39. SPIE, 2019.
		
		\bibitem{chudzicki2010parallel}
		Christopher Chudzicki and Frederick~W Strauch.
		\newblock \href{https://arxiv.org/abs/1008.1806}{Parallel state transfer and
			efficient quantum routing on quantum networks}.
		\newblock {\em Physical review letters}, 105(26):260501, 2010.
		
		\bibitem{pemberton2011perfect}
		Peter~J Pemberton-Ross and Alastair Kay.
		\newblock \href{https://arxiv.org/abs/1007.2786}{Perfect quantum routing in
			regular spin networks}.
		\newblock {\em Physical Review Letters}, 106(2):020503, 2011.
		
		\bibitem{zhan2014perfect}
		Xiang Zhan, Hao Qin, Zhi-hao Bian, Jian Li, and Peng Xue.
		\newblock \href{https://arxiv.org/abs/1405.6422}{Perfect state transfer and
			efficient quantum routing: A discrete-time quantum-walk approach}.
		\newblock {\em Physical Review A}, 90(1):012331, 2014.
		
		\bibitem{paganelli2013routing}
		Simone Paganelli, Salvatore Lorenzo, Tony~JG Apollaro, Francesco Plastina, and
		Gian~Luca Giorgi.
		\newblock \href{https://arxiv.org/abs/1301.5610}{Routing quantum information in
			spin chains}.
		\newblock {\em Physical Review A}, 87(6):062309, 2013.
		
		\bibitem{west2001introduction}
		Douglas~Brent West et~al.
		\newblock {\em Introduction to graph theory}, volume~2.
		\newblock Prentice hall Upper Saddle River, 2001.
		
		\bibitem{osborne2006statics}
		Tobias~J Osborne.
		\newblock \href{https://arxiv.org/abs/quant-ph/0312126}{Statics and dynamics of
			quantum XY and Heisenberg systems on graphs}.
		\newblock {\em Physical Review B}, 74(9):094411, 2006.
		
		\bibitem{godsil2010can}
		Chris Godsil.
		\newblock \href{https://arxiv.org/abs/1011.0231}{When can perfect state
			transfer occur?}
		\newblock {\em arXiv preprint arXiv:1011.0231}, 2010.
		
		\bibitem{cheung2011perfect}
		Wang-Chi Cheung and Chris Godsil.
		\newblock \href{https://arxiv.org/abs/1010.4721}{Perfect state transfer in
			cubelike graphs}.
		\newblock {\em Linear Algebra and Its Applications}, 435(10):2468--2474, 2011.
		
		\bibitem{coutinho2015no}
		Gabriel Coutinho and Henry Liu.
		\newblock \href{https://arxiv.org/abs/1408.2935}{No Laplacian perfect state
			transfer in trees}.
		\newblock {\em SIAM Journal on Discrete Mathematics}, 29(4):2179--2188, 2015.
		
		\bibitem{kendon2011perfect}
		Vivien~M Kendon and Christino Tamon.
		\newblock
		\href{https://www.researchgate.net/profile/Christino-Tamon/publication/228664001_Perfect_State_Transfer_in_Quantum_Walks_on_Graphs/links/0fcfd5124cab899518000000/Perfect-State-Transfer-in-Quantum-Walks-on-Graphs.pdf}{Perfect
			state transfer in quantum walks on graphs}.
		\newblock {\em Journal of Computational and Theoretical Nanoscience},
		8(3):422--433, 2011.
		
		\bibitem{waugh1968approach}
		John~S Waugh, Lee~M Huber, and Ulrich Haeberlen.
		\newblock Approach to high-resolution {NMR} in solids.
		\newblock {\em Physical Review Letters}, 20(5):180, 1968.
		
		\bibitem{mansfield1975diffraction}
		Peter Mansfield and Peter~K Grannell.
		\newblock ``{D}iffraction" and microscopy in solids and liquids by {NMR}.
		\newblock {\em Physical Review B}, 12(9):3618, 1975.
		
		\bibitem{warren1979selective}
		WS~Warren, S~Sinton, DP~Weitekamp, and A~Pines.
		\newblock Selective excitation of multiple-quantum coherence in nuclear
		magnetic resonance.
		\newblock {\em Physical Review Letters}, 43(24):1791, 1979.
		
		\bibitem{ajoy2013quantum}
		Ashok Ajoy and Paola Cappellaro.
		\newblock Quantum simulation via filtered hamiltonian engineering: Application
		to perfect quantum transport in spin networks.
		\newblock {\em Physical review letters}, 110(22):220503, 2013.
		
		\bibitem{singh2020perfect}
		Siddhant Singh, Bibhas Adhikari, Supriyo Dutta, and David Zueco.
		\newblock \href{https://arxiv.org/abs/2011.03586}{Perfect state transfer on
			hypercubes and its implementation using superconducting qubits}.
		\newblock {\em Physical Review A}, 102(6):062609, 2020.
		
		\bibitem{hell1993largest}
		Pavol Hell and Karen Seyffarth.
		\newblock \href{https://core.ac.uk/download/pdf/82750546.pdf}{Largest planar
			graphs of diameter two and fixed maximum degree}.
		\newblock {\em Discrete Mathematics}, 111(1-3):313--322, 1993.
		
		\bibitem{lovasz2006graph}
		L{\'a}szl{\'o} Lov{\'a}sz.
		\newblock
		\href{https://www.ams.org/journals/bull/2006-43-01/S0273-0979-05-01088-8/S0273-0979-05-01088-8.pdf}{Graph
			minor theory}.
		\newblock {\em Bulletin of the American Mathematical Society}, 43(1):75--86,
		2006.
		
		\bibitem{contreras2020gluing}
		Ivan Contreras, Michael Toriyama, and Chengzheng Yu.
		\newblock \href{https://arxiv.org/abs/1805.01827}{Gluing of graph Laplacians
			and their spectra}.
		\newblock {\em Linear and Multilinear Algebra}, 68(4):710--749, 2020.
		
		\bibitem{frederickson1989efficient}
		Greg~N Frederickson and Ravi Janardan.
		\newblock
		\href{https://docs.lib.purdue.edu/cgi/viewcontent.cgi?article=1553&context=cstech}{Efficient
			message routing in planar networks}.
		\newblock {\em SIAM Journal on Computing}, 18(4):843--857, 1989.
		
		\bibitem{becker1986algorithms}
		Michael Becker and Kurt Mehlhorn.
		\newblock
		\href{https://publikationen.sulb.uni-saarland.de/bitstream/20.500.11880/26169/1/fb14_1984_07.pdf}{Algorithms
			for routing in planar graphs}.
		\newblock {\em Acta Informatica}, 23(2):163--176, 1986.
		
		\bibitem{kay2011basics}
		Alastair Kay.
		\newblock \href{https://arxiv.org/abs/1102.2338}{Basics of perfect
			communication through quantum networks}.
		\newblock {\em Physical Review A}, 84(2):022337, 2011.
		
		\bibitem{kostak2007perfect}
		V~Kostak, GM~Nikolopoulos, and I~Jex.
		\newblock \href{https://arxiv.org/abs/quant-ph/0702016}{Perfect state transfer
			in networks of arbitrary topology and coupling configuration}.
		\newblock {\em Physical Review A}, 75(4):042319, 2007.
		
	\end{thebibliography}

\end{document}